\newtheorem{lemma}{Lemma}[section]
\newtheorem{definition}{Definition}[section]
\title{Enhancing Balanced Graph Edge Partition with Effective Local Search}
\author{\textbf{Paper ID: 1329}  \\
\textbf{ Area: Heuristic Search and Optimization (HSO)}\\
\textbf{ There is an appendix containing the parts omitted due to the limited space} % All authors must be in the same font size and format. Use \Large and \textbf to achieve this result when breaking a line
}
\title{My Publication Title --- Single Author}
\author {
    % Author
    Author Name \\
}
\title{Enhancing Balanced Graph Edge Partition with Effective Local Search}
\author {
    % Authors
        Zhenyu Guo,\textsuperscript{\rm 1}
        Mingyu Xiao, \textsuperscript{\rm 1\thanks{Corresponding author.}}
        Yi Zhou, \textsuperscript{\rm 1}
        Dongxiang Zhang, \textsuperscript{\rm 2}
        Kian-Lee Tan, \textsuperscript{\rm 3} \\
}
\begin{document}

\maketitle

\begin{abstract}
    Graph partition is a key component to achieve workload balance and reduce job completion time in parallel graph processing systems. Among the various partition strategies, edge partition has demonstrated more promising performance in power-law graphs than vertex partition and thereby has been more widely adopted as the default partition strategy by existing graph systems. The graph edge partition problem, which is to split the edge set into multiple balanced parts to minimize the total number of copied vertices, has been widely studied from the view of optimization and algorithms. In this paper, we study local search algorithms for this problem to further improve the partition results from existing methods. More specifically, we propose two novel concepts, namely adjustable edges and blocks. Based on these, we develop a greedy heuristic as well as an improved search algorithm utilizing the property of the max-flow model. To evaluate the performance of our algorithms, we first provide adequate theoretical analysis in terms of the approximation quality.
    We significantly improve the previously known approximation ratio for this problem. Then we conduct extensive experiments on a large number of benchmark datasets and state-of-the-art edge partition strategies. The results show that our proposed local search framework can further improve the quality of graph partition by a wide margin.
\end{abstract}

\section{Introduction}
Graph partition plays a key role in performance improvement for massive graph processing systems. In a distributed graph system, such as Google Pregel \cite{malewicz2010pregel}, GraphX \cite{gonzalez2014graphx}, and GraphLab \cite{low2012distributed}, the original graph may be too large to fit in memory and has to be partitioned into multiple parts which are processed in parallel by multiple machines. %For the purpose of load balancing, the sizes of parts are expected to be similar.
The quality of graph partition is often measured by two important performance criteria. One is \textit{workload balance} which expects the sizes of the partitioned parts to be as equal as possible. The goal is to reduce the overall job completion time (JCT) in parallel systems, where the bottleneck is caused by the slowest job. The other is \emph{communication overhead} whose objective is to minimize the connection among the parts. It is challenging to compute the optimal partition as the problem has been proven to be NP-Hard \cite{goemans1995improved,feder1999complexity}.

Vertex partition is a popular model in which the workload of each part is evaluated by its number of vertices and the communication overhead between two parts is evaluated by their connecting edges in the original graph.
In the past decades, there have been significant efforts devoted to this problem, including both theoretical results~\cite{andreev2006balanced,feldmann2013fast} and heuristic algorithms.
However, the performance of vertex partition models may degrade for parallel algorithms to handle power-law graphs \cite{gonzalez2012powergraph}.
In fact, most natural graphs follow a skewed degree distribution similar to the power-law distribution \cite{faloutsos1999power,newman2001random}.
We can observe that in real-world graphs, a small fraction of vertices may connect to a large part of the graph. For example, celebrities in a social network attract a huge number of followers. This property brings non-trivial challenges to vertex partitioners, including workload balance, partitioning, communication, storage, and computation \cite{gonzalez2012powergraph}.

To address the issue of power-law distribution, edge partition was introduced to partition the graph based on edge sets \cite{gonzalez2012powergraph}.
A vertex is allowed to appear in multiple parts sharing this vertex.
The workload of edge partition is evaluated by the number of edges in each part while the communication overhead is evaluated by the \emph{replication factor}, which indicates the average time of each vertex appearing in all the parts. Edge partition is more efficient in power-law graphs, and
several parallel graph processing systems, including  PowerGraph \cite{gonzalez2012powergraph}, Spark GraphX \cite{gonzalez2014graphx} and Chaos \cite{roy2015chaos}, have adopted edge partition as the default partition strategy.
% {\color{red} and many efficient heuristic algorithms have been proposed in recent years \shortcite{roy2015chaos,gonzalez2012powergraph,petroni2015hdrf,margo2015scalable,zhang2017graph,li2017simple,sebastian2019Scalable,xie2014distributed}.}

\subsection{Contributions}
In this paper, we focus on edge partition.
 %and study the problem from a new perspective.
 Instead of proposing a new edge partition method, we adopt local search techniques to further improve the partition results from existing methods. We will prove several structural properties of edge partition and introduce two novel concepts named \emph{adjustable edges} and \emph{blocks}. Based on these concepts, we develop two types of effective local search algorithms, namely LS-G and LS-F, which are complementary to state-of-the-art approaches. From the initial partition solutions derived from existing partition methods, our local search algorithms can further improve the solution by neighborhood operators. LS-G is a fast greedy heuristic and LS-F leverages the max-flow model to yield partitions with higher quality.
 In theory, we present theoretical analysis in terms of the approximation quality and provide improved approximation ratios for this problem.
 In practice, our experiments are conducted on multiple large-scale benchmark datasets and results show that the solutions derived from state-of-the-art edge partition methods can be further improved by a wide margin.

\section{Preliminaries}
%In this section, we first present some notations used in this paper and the problem statement of \textsc{EPP}.
%We then introduce some models used in our algorithm.

Let $G=(V,E)$ stand for an undirected graph with $n=|V|$ vertices and $m=|E|$ edges.
The neighbor set of a vertex $v$ is denoted by $N(v)=\{u\mid{u,v}\in E\}$ and the neighbor set of a vertex set $S$ is denoted by $N(S)$.
For a subgraph or an edge set $G'$, we use $V(G')$ to denote the set of vertices appearing in $G'$ and $E(G')$ to denote the set
of edges appearing in $G'$.
For an edge subset $E'\subseteq E$, we use $G[E']$ to denote the subgraph \emph{induced} from the edge set $E'$,
i.e., the graph $(V(E'), E')$.

Given a graph $G=(V,E)$, a \emph{$k$-edge partition} divides $E$ into $k$ disjoint groups,
denoted by $P=\{E_1, E_2, \cdots, E_k\}$,  where $E_i \cap E_j = \emptyset, \forall i\ne j$ and $\bigcup_{1\le i \le k} E_i = E$.
A $k$-edge partition $P$ is \emph{$\alpha$-balanced} if each part $E_i$ in $P$ satisfies:
$$|E_i|\le \left\lceil \alpha \frac{|E|}{k} \right\rceil.$$
The \emph{replication factor} of a $k$-edge partition $P$, denoted by  $RF(P)$, is defined as follows:
$$RF(P)=\frac{1}{|V|}\sum_{i=1}^k |V(E_i)|.$$

Given a graph $G=(V,E)$ and two constants $k$ and  $\alpha$,  the \textsc{Edge Partition Problem} (\textsc{EPP}) is to find an $\alpha$-balanced $k$-edge partition $P$ such that the replication factor $RF(P)$ is minimized.

%\section{Local-Search Algorithms}
%In this section, we will introduce our algorithms. We first introduce some basic concepts that will be used in our local search strategies. Then we consider sub-algorithms for the most important ingredients of the algorithm, which can be regarded as
%a single step of the local search. Finally, we present the whole algorithm.

\section{Adjustable Edges and Blocks}
In this section, we first introduce some basic structural concepts that will be used in our local search strategies.
As mentioned above, our algorithms are local-search algorithms based on a given $k$-edge partition.
So next, we always assume that a $k$-edge partition $P=\{E_1, E_2, \cdots, E_k\}$ is given, which can be obtained by known algorithms or
a random assignment.
Based on a given $k$-edge partition, we will move some edges from one part to other parts to decrease the communication load (replication factor), and in the meanwhile keep each part under the workload balance constraint.

For the \textsc{Vertex Partition Problem}, a local-search strategy is easy to develop. We can move a vertex subset from one part to another part as long as the receiving part is still under the required workload balance and the number of crossing edges among the parts can be decreased.
%Usually, we only move a few vertices in each step, since it is hard to find such good vertex subsets.
However, the local search strategy in \textsc{Edge Partition Problem} is more complicated because to decrease the replication factor, we are often required to move a subset of edges from one part to \textit{multiple} different parts simultaneously. In this paper, we will present novel and effective strategies for edge movements. Before that, we first introduce the concept of \textit{adjustable edge} which is important to understand our local search algorithms.

\begin{definition}[\textbf{Reachability}]
Let $P=\{E_1, E_2, \cdots, E_k\}$ be a $k$-edge partition. A part $E_i$ is \emph{reachable} for an edge $(u,v)$ if $E_i$ contains both of the two endpoints of the edge, i.e., $u,v \in V(E_i)$.
\end{definition}

\begin{definition}[\textbf{Adjustable Edge}]
Let $P=\{E_1, E_2, \cdots, E_k\}$ be a $k$-edge partition. An edge $(u,v)\in E_i$ is called an \emph{adjustable edge} if there exists $E_j$ such that $j\neq i$ and its two endpoints $u,v\in V(E_j)$.
\end{definition}
%it has at least  two reachable parts.

For an adjustable edge, we may be able to move it to another reachable part without increasing the replication factor. A simple approach is to treat the movement of the adjustable edge as a possible way to find better solution. .
However, the replication factor can only be strictly decreased when the adjustable edge has a degree-$1$ endpoint in $G[E_i]$. In this case, the degree-1 endpoint will disappear from $G[E_i]$ after removing the edge, and thus reduce the total number of vertices $\sum_{i=1}^{k}|V(E_i)|$. Unfortunately, the number of degree-$1$ endpoints is limited, which restricts the optimization space for local search.
Our strategy is to find further cases in which the replication factor can be strictly decreased by moving adjustable edges to other reachable parts. Based on the idea, we present another key structure called \emph{block}.

\begin{definition}[\textbf{Block}]
Let $P=\{E_1, E_2, \cdots, E_k\}$ be  a $k$-edge partition and $E^*$ be the set of adjustable edges.
Each connected component in the subgraph $G[E_i]-E^*$ is called a \emph{block} of $E_i$. A block consists of a single vertex
is called a \emph{vertex block}.
\end{definition}

Our local search is designed to move a block from one part to another part. In detail, we first move all the adjustable edges incident on this block to other reachable parts. Note that this step will not increase the replication factor. After that, if we move a block $C$ from part $E_i$ to another part $E_j$, we can decrease the replication factor by
$$\frac{1}{|V|}|V(C)\cap V(E_j)|.$$

Figure \ref{bloExp} shows an example of block movement that can reduce the replication factor. Consider a block $C$ in part $E_1$ connecting $3$ adjustable edges $e_1,e_2,e_3$. In the first step, we move $e_1$ and $e_3$ to part $E_2$, $e_2$ to part $E_3$, and the replication factor remains the same. In the next step, we move the whole block $C$ from part $E_1$ to part $E_4$. Since $V(E_1)\cap V(E_4)=\{v_1,v_2\}$, we can reduce $2$ duplicate vertices.

\begin{figure}[ht]
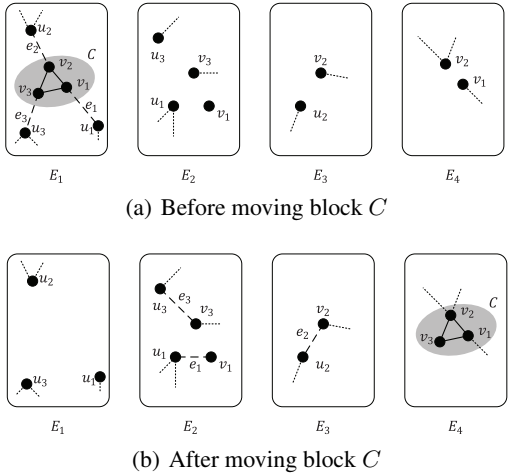

\centering
\subfigure[Before moving block $C$]{
    \centering
    \includegraphics[width=0.8\linewidth]{Visio-EP_4_a-eps-converted-to.pdf}
    \label{bloExp:a}
}
\subfigure[After moving block $C$]{
    \centering
    \includegraphics[width=0.8\linewidth]{Visio-EP_4_b-eps-converted-to.pdf}
    \label{bloExp:b}
}

\caption{An example of block movement, where the black part containing three vertices $v_1, v_2$ and $v_3$ is the block $C$, solid edges mean edges in the block, dashed edges mean adjustable edges, and dotted edges mean other edges.}\label{bloExp}
\end{figure}

Different ways to move adjustable edges and blocks will generate different algorithms. We need to consider how and when to move
adjustable edges and blocks under the balance constraint. In this paper, we will give two ideas to do these adjustable operations,
one is based on a greedy idea and one is based on the max flow technique.
% Before introducing these two algorithms, we first introduce some data structures that may be used in our algorithms to maintain our blocks and edges.

\section{A Fast Algorithm: LS-G}
In this section, we introduce a simple and fast local-search algorithm.
We first introduce the most important ingredient of the algorithm, which is the sub-algorithm for the adjustment operation for a single block. Then we present the complete algorithm.

\noindent\textbf{Greedy Adjustments.}~
Let $C$ be a block in part $E_i$. We use $A(C)$ to denote the set of adjustable edges incident on $C$ in the subgraph $G[E_i]$,
i.e., $A(C)=\{(a,b)|(a,b)\in E_i \mbox{~is an adjustable edge}~\land~|\{a,b\}\cap V(C)|=1 \}$. For a block $C$ in $E_i$, the algorithm first checks whether $C$ can be moved to another part $E_j\neq E_i$ to decrease the replication factor. After successfully moving $C$, then we consider the adjustable edges in $A(C)$ in random order. For each adjustable edge $e\in A(C)$, we check whether it can be moved to a reachable part different from $E_i$.
If any step of the algorithm cannot be executed, then we undo all the moving operations in the algorithm.
% The algorithm is denoted by $RA(C)$ and its main steps are listed below.
The algorithm to check whether a block $C$ in  $E_i$ can be moved to another part is denoted by
RA$(C)$ and its pseudocode is shown in Algorithm \ref{RA}.

In an adjustment operation, we need to move all the adjustable edges in $A(C)$ to other parts and then move the total block to another part to decrease the replication factor. For the special case that the block is a vertex block, the second step of moving the block can be omitted since the vertex will automatically disappear in $E_i$ after removing all the adjustable edges incident on it out of $E_i$. In our implementation, we first move the whole block $C$ (the edges in $C$) to another part $E_j$ and then consider moving the adjustable edges in $A(C)$.
It may be more effective to check the feasibility: when $C$ contains several edges, it may be hard to find a part $E_j$ to ``receive'' all the edges together under the balance constraints, while the edges in $A(C)$ can be moved to different parts separately.
\begin{algorithm}[htb]
\caption{ RA$(C)$}\label{RA}
\begin{algorithmic}[1]
 \IF {there are some $E_j\neq E_i$ such that $|E_j|\leq \lceil \alpha \frac{|E|}{k} \rceil -|V(C)|$ and $V(E_j)\cap V(C)\neq \emptyset$}
       \STATE {Let $E_{j'}$ be a part satisfying the condition such that $|V(E_{j'})\cap V(C)|$ is maximized}
       \STATE {$E_i\leftarrow E_i\setminus E(C)$ and $E_{j'}\leftarrow E_{j'}\cap E(C)$}
       %\STATE {Remove $C$ from $B_i$ and insert $C$ into $B_{j'}$}
       \ELSE {}
       \STATE {goto Step 15}
       \ENDIF
\FOR {each adjustable edge $e\in A(C)$}
       \IF {there is a reachable part $E_j\neq E_i$ such that $|E_j|< \lceil \alpha \frac{|E|}{k} \rceil$}
       \STATE {$E_i\leftarrow E_i\setminus\{e\}$ and $E_j\leftarrow E_j\cap\{e\}$}
       \ELSE {}
       \STATE {goto Step 15}
       \ENDIF
     \ENDFOR
\STATE {\textbf{return} yes}
\STATE {Undo all the moving operations and \textbf{return} no}
\end{algorithmic}
\end{algorithm}

\noindent\textbf{The Algorithm LS-G.}~
%\subsection{The Algorithm LS-G}
The whole algorithm, named LS-G, is shown in Algorithm \ref{LS-G}. It scans the parts from an initial solution of any existing edge partition method. For each part $E_i$, we identify the adjustable edges as well as the blocks. Then, we apply RA$(C)$ to deal with each block as the adjustment operation. The blocks within a part are processed in non-decreasing order of their sizes. Note that edge movement will not trigger block reconstructions based on the following lemma.

\begin{algorithm}[htb]
    \caption{LS-G}\label{LS-G}
    \begin{algorithmic}[1]
     \REQUIRE a graph $G=(V,E)$ with an $\alpha$-balanced $k$-edge partition $P=\{E_1,\dots, E_k\}$ of the edge set $E$.
     \ENSURE an $\alpha$-balanced $k$-edge partition.
     \STATE {all parts of $P$ are marked `b'}
     \WHILE {there is a part $E_i$ marked `b'}
     \STATE {mark $E_i$ with `w'}
     %\STATE {$B_i \leftarrow \emptyset$}
     \STATE {compute and order all blocks $C_{i1},C_{i2}, \dots, C_{il_i}$ in $E_i$}
     %\STATE {store $C_{i1}, C_{i2}, \dots, C{il_i}$ into $B_i$}
    %  \WHILE {$B_i\ne \emptyset$}
    %  \STATE {pick the smallest block $C$ from $B_i$}
    %   \STATE {call the algorithm $RA(C)$}
    %   \STATE {for any changed part during $RA(C)$, mark it `b'}
    % \ENDWHILE
    \FOR{$j=1$ \TO $l_i$}
        \STATE call the algorithm RA$(C_{ij})$
        \STATE for any changed part during RA$(C_{ij})$, mark it `b'
    \ENDFOR
      \ENDWHILE
      \RETURN {$P=\{E_1,\dots, E_k\}$}
    \end{algorithmic}
    \end{algorithm}

%When a block is moved from $E_i$ to $E_j$, with the help of the following lemma, we can continue to handle the remaining blocks in $E_i$ without reconstructing the blocks in the new partition.

%In other words, the blocks will be processed iteratively. in a certain ordering so that the algorithm can have good performance. The following lemma saves the cost of
%We observe that

\begin{lemma}\label{lem_one}
Given a $k$-edge partition $P=\{E_1, E_2, \cdots, E_k\}$ of the edge set of graph $G=(V,E)$.
Let $C$ be a block in part $E_i\in P$ and $e\in E_i$ be an edge not in $C$. Let $P'=\{E'_1, E'_2, \cdots, E'_k\}$ be the new edge partition obtained by moving $e$ to another part $E_j\neq E_i$, where $E'_l=E_l$ for $l\in\{1,2,\dots, k\}\setminus \{i,j\}$.
Then $C$ is still a block in $E'_i$ of the new partition $P'$.
\end{lemma}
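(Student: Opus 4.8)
The plan is to track how the set of adjustable edges changes when we move a single edge $e$ from $E_i$ to $E_j$, and then argue that this change cannot affect which edges of $E_i\setminus\{e\}$ lie in the block $C$. First I would recall the definitions: the blocks of $E_i$ are the connected components of $G[E_i] - E^*$, where $E^*$ is the set of all adjustable edges of the partition $P$. So to show $C$ survives as a block of $E'_i$, it suffices to show two things: (1) the edge set of $G[E'_i]$ restricted to non-adjustable edges, when intersected with the vertices of $C$, is exactly the same as before; and (2) no new adjustable edge appears \emph{inside} $C$ and no previously-adjustable incident edge becomes non-adjustable in a way that would merge $C$ with something.

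The key observation is about how adjustability of an edge $(a,b)\in E_l$ can change. An edge is adjustable iff some part other than $E_l$ contains both $a$ and $b$. Moving $e=(x,y)$ from $E_i$ to $E_j$ changes only the vertex sets $V(E_i)$ and $V(E_j)$: possibly $V(E_i)$ loses $x$ and/or $y$ (if they had degree $1$ in $G[E_i]$), and $V(E_j)$ gains $x$ and/or $y$ (if not already present). I would consider the edges of $E_i\setminus\{e\}$ and check, case by case, whether their adjustability status can flip. An edge of $E_i$ can only \emph{gain} adjustability because $E_j$ now contains both its endpoints — but that would require both endpoints to be among $\{x,y\}$, i.e. the edge is a parallel copy of $e$; in a simple graph this does not occur, and even allowing multi-edges one checks $C$ is unaffected. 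An edge of $E_i$ can only \emph{lose} adjustability if the unique witnessing part was $E_i$ itself after the move — impossible, since witnesses must differ from the containing part — or because $V(E_i)$ shrank; but shrinking $V(E_i)$ only removes a witness for edges in \emph{other} parts, not for edges in $E_i$. Hence the adjustability status of every edge of $E_i\setminus\{e\}$ is unchanged, so $G[E'_i] - E^{*\prime}$ and $G[E_i]-E^*$ have the same edge set on $V(E_i)\setminus\{$any vertex that disappeared$\}$.

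Finally I would close the argument by noting that the only vertices that can disappear from $V(E_i)$ are endpoints of $e$ itself, and since $e\notin C$ and $C$ is a \emph{maximal} connected non-adjustable subgraph, if an endpoint of $e$ belonged to $V(C)$ it would only be because some non-adjustable edge of $C$ is incident to it — but such an edge keeps that vertex in $V(E'_i)$. So the connected component structure of $G[E'_i]-E^{*\prime}$ containing the vertices of $C$ is identical to that of $G[E_i]-E^*$, i.e. $C$ is still a block of $E'_i$. The main obstacle, and the part deserving the most careful case analysis, is ruling out that moving $e$ to $E_j$ creates a \emph{new} adjustable edge within $E_i$ that would split $C$: this is where one must use either simplicity of $G$ or a precise bookkeeping of endpoint multiplicities, and it is the step I would write out in full detail.
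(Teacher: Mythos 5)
Your bookkeeping of how adjustability can change contains a genuine error, and it sits exactly at the step you yourself flag as the delicate one. You claim that an edge of $E_i$ can gain adjustability via $E_j$ only if both of its endpoints lie in $\{x,y\}$, i.e., only if it is a parallel copy of $e$. That is not the right condition: after the move the witness set is $V(E'_j)=V(E_j)\cup\{x,y\}$, so an edge $(x,b)\in E_i$ with $b\in V(E_j)$ and $x\notin V(E_j)$ becomes adjustable even in a simple graph --- it only needs to share \emph{one} endpoint with $e$. Hence your central claim that the adjustability status of every edge of $E_i\setminus\{e\}$ is unchanged is false at the level of generality in which the lemma is stated. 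Concretely, take $E_1=\{(x,y),(x,b)\}$, $E_2=\{(x,w),(y,w)\}$, $E_3=\{(b,z)\}$: the only adjustable edge is $(x,y)$, so $C=\{(x,b)\}$ is a block of $E_1$; moving $e=(x,y)$ to $E_3$ puts both $x$ and $b$ into $V(E'_3)$, making $(x,b)$ adjustable, so $C$ splits into two vertex blocks and is no longer a block. Simplicity of $G$ therefore cannot rescue the step. What saves the lemma in the paper's use of it is that the algorithms only perform moves for which your ``gain'' case cannot arise: an adjustable edge is only ever moved to a part that is already reachable for it (so no vertex set grows), and an edge of a different block is vertex-disjoint from $C$ (distinct blocks are distinct components of $G[E_i]-E^*$), so the vertices newly added to $V(E_j)$ avoid $V(C)$. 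A correct completion of your argument must invoke one of these restrictions on the move rather than simplicity or multiplicity bookkeeping.

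The half of your analysis that is sound --- no previously adjustable edge of $E_i$ can lose adjustability, since its witness part differs from $E_i$ and no part other than $E_i$ loses vertices --- is essentially the entirety of the paper's own proof, which checks that witness parts keep their vertices so that $C$ cannot merge with incident adjustable edges. The ``splitting'' direction, where your write-up goes wrong, is precisely the direction that genuinely requires the restriction on which moves are performed.
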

\begin{proof}
Since we only move an edge in $E_i\setminus E(C)$ to another part $E_j$, we know that $C$ is still a connected subgraph in $G[E_i]$.
For any edge $(a,b)$ in $E(C)\cap E'_i\subseteq E(C)\cap E_i$,
it is an adjustable edge in $P$ before moving $e$ and there is a part $E_{i_0}\neq E_i$ such that
$a,b \in V(E_{i_0})$. Note that after moving $e$ to $E_j$, no matter if $E_{i_0}=E_j$ or not, no vertex in $V(E_{i_0})$ will be removed,
i.e., $V(E'_{i_0})\subseteq V(E_{i_0})$. So part $E_{i_0}$ is still a reachable part for $(a,b)$ and then $(a,b)$
is still an adjustable edge. Thus, $C$ is still a block in $E'_i$ of the new partition $P'$.
\end{proof}
Lemma~\ref{lem_one} implies that  after moving $C$ together with $A(C)$ from a part $E_i$ to other parts, any other block in $E_i$  is
still a block in the new edge partition.
We do not need to compute new blocks in $E_i$ after the operators.
% So after dealing with one block in a part $E_i$, we do not need to use $merge$ or $split$ to update the blocks in this part, and then we can deal with another block directly.

\section{An Algorithm Based on Max Flow: LS-F}
In this section, we introduce an algorithm with a more sophisticated technique for adjustment operation based on blocks.
The aforementioned algorithm RA$(C)$ is fast. However, as it heuristically moves blocks one at a time, it may fail to move certain blocks (due to the constraints).
% it may not be able to find solutions for the adjustment of edges in $A(C)$ even when solutions exist.
To search for more domains, we suggest the following adjustment algorithm based on the max-flow model.
 This algorithm considers several different blocks together in each iteration to find better movements for adjustable edges.
To ensure that we can move several blocks simultaneously, we need the following definitions.

%The max-flow model has been well studied in the last decades and several polynomial algorithms were designed for it \cite{ford2009maximal,gomory1961multi,dinic1970algorithm}.
%However, there is still a lack of fast linear or near-linear time algorithms for the max-flow model.
%Thus, against this unsatisfied fact, we hope that in each iteration, we can deal with several blocks at the same time to decrease the time of calling the max-flow algorithm.
%As a consequence, in the next subsection, we will introduce some further concepts of blocks which can be moved together in each iteration.

\medskip
\noindent\textbf{Independent Block Set.}~
%\subsection{Independent Block Set}
The strategy of our algorithm recommends us to seek for some blocks which do not affect each other when moving together.
Based on this motivation, we present important structures about blocks called \emph{independent block set} as our adjustment operation. We will move all blocks in an independent block set simultaneously.

 \begin{definition}[Independent Block Set]\label{indepBlockDef}
    Let $P$ be a $k$-edge partition of a graph $G$. Two blocks $C_i$ and $C_j$ in $P$ are called \emph{independent} if they are from the same part of $P$ or the shortest distance between $V(C_i)$ and $V(C_j)$ in $G$ is at least two.
    A set of blocks $\mathcal{C}=\{C_1,C_2,\dots,C_l\}$ is \emph{independent} if
    any pair of blocks in it are independent.
\end{definition}

In the above definition, we set the distance between two blocks is at least two. So no two blocks in an independent set intersect, and the adjustable edges incident on two blocks in an independent set are different.
Thus, for a set of independent blocks, we can move all the adjustable edges incident on all blocks in the set to other parts simultaneously without increasing the replication factor. After this, we may reduce the replication factor by moving these
blocks.

\medskip
\noindent\textbf{Adjustments Based on Max Flow.}~
%\subsection{Adjustments Based on Max Flow}
Given an independent block set $\mathcal{C}$, %for each block $C_j\in \mathcal{C}$,
we consider whether we can move together all adjustable edges in $\cup_{C\in \mathcal{C}}A(C)$ from its own part to other reachable parts under the balance constraints.
Since we consider moving all adjustable edges incident on a set of blocks together, we may be able to reach more search domains and find better results.
This is the advantage of this method, compared with the previous greedy method which only considers one block each time.

We use a max flow model to solve the problem of moving adjustable edges incident on a set of blocks together under the
balance constraints.
% Assume that block $C_j\in \mathcal{C}$ is in part $E_{i_j}$.
We construct a directed graph $H=(V_H, A_H)$ and reduce our problem to the problem of finding a maximum flow in $H$ from the source $v_{source}$ to the sink $v_{sink}$. The graph $H=(V_H, A_H)$ is constructed as follows,
where $V_H=V_{edge}\cup V_{part} \cup \{v_{source}, v_{sink}\}$.

\begin{itemize}
\item Introduce two vertices, the source $v_{source}$ and the sink $v_{sink}$.
\item For each adjustable edge $e\in\bigcup_{C\in \mathcal{C}}A(C)$, introduce a vertex $v_e$ with an arc $\overrightarrow{v_{source}v_e}$ from the source $v_{source}$ to $v_e$ of capacity $c(v_{source}v_e)=1$. The set of vertices corresponding to edges in $\bigcup_{C\in \mathcal{C}}A(C)$ is denoted by $V_{edge}$.
\item For each part $E_i$ of $P$, introduce a vertex $v_{E_i}$ with an arc $\overrightarrow{v_{E_i}v_{sink}}$ from $v_{E_i}$ to the sink $v_{sink}$ of capacity $c(v_{E_i} v_{sink}) = \Delta_i$, where $\Delta_i$ is the remaining capacity for part $E_i$ to reach the bound of the balance constraint.
\item For each vertex $v_e\in V_{edge}$, add an arc $\overrightarrow{v_ev_{E_i}}$ from $v_e$ to $v_{E_i}$ of capacity $c(v_ev_{E_i})=1$ for each reachable part $E_i$ of $e$ except the original part containing $e$.
\end{itemize}

In the above model, we have not given the precise definition of $\Delta_i$ but it does not cause trouble in understanding the model.
In fact, we will let $\Delta_i=\lceil \alpha \frac{|E|}{k} \rceil - |E^*_i|$.
Here we use $E^*_i$ instead of $E_i$ because
we still need to save some space for moving blocks.
Assume that two blocks $C_1$ and $C_2\in \mathcal{C}$ will be moved to $E_i$, then we will let $E^*_i=E_i\cup C_1\cup C_2$.
However, we no longer know which block will be moved to which part.
To fix this and simplify the algorithm, in our algorithm, we will first determine the 'destination part' for each block before moving
the adjustable edges. We will select the destination part as the part after moving the block to where the replication factor is minimized.

Let $f$ be a maximum flow in $H$, which can be computed by standard max-flow algorithms.
For a block $C\in \mathcal{C}$, if for any edge  $e\in A(C)$ it holds that $f(\overrightarrow{v_{source}v_e})=1$, i.e., there
is an individual flow going through the arc $\overrightarrow{v_{source}v_e}$ in $f$, then we let the indication function $I(C)=1$;
otherwise let $I(C)=0$. Let $\mathcal{C}'$ be the set of blocks with $I(C)=1$.

We claim that based on the flow $f$ we can move all adjustable edges in $A(C)$ for all blocks in $\mathcal{C}'$ to other parts without breaking the balance constraints.
For each block $C\in\mathcal{C}'$, the algorithm moves each adjustable edge $e\in A(C)$
to part $E_j$ for $f(\overrightarrow{v_ev_{E_i}})=1$, i.e., there is an individual flow going through the arc $\overrightarrow{v_ev_{E_i}}$ in $f$.
In fact, for a vertex $v_{E_i}\in V_{part}$, the flow on the arc $\overrightarrow{v_{E_i}v_{sink}}$ is at most $\Delta_i$, thus the number of adjustable edges moving to $E_i$ will not break the balance constraint even no edge moves out from $E_i$ and all edges in $E_i^*\setminus E_i$ move to $E_i$.
% Next, we prove that we can do this adjustment safely. Due to space limitations, the proof is provided in our full version that has been made anonymous\footnote{XXXX}.

We will use MF$(\mathcal{C})$ to denote the above algorithm based on the maximum flow in $H$.
Note that MF$(\mathcal{C})$ only moves edges in $A(C)$ for blocks $C$ with $I(C)=1$ and keeps unchanged for blocks $C$ with $I(C)=0$. In fact, we will also undo the moving of blocks $C$ with $I(C)=0$.

\medskip
\noindent\textbf{Algorithm LS-F.}~
%\subsection{Algorithm LS-F}
The algorithm is to iteratively deal with an independent block set by
calling MF$(\mathcal{C})$. Two concerns are mainly resolved: the generation of independent sets of blocks and the algorithm termination condition.
We do not need to find a maximum independent set of blocks because it is hard to compute and not very helpful in our heuristic algorithm.
%see that the famous  NP-hard problem -- the independent set problem can be reduced directly to our problem of finding a maximum independent set of blocks. %So it is NP-hard to compute a maximum independent set of blocks.
In our algorithm, we find an independent set of blocks containing at most one block from each part by a greedy method.
We first pick an arbitrary block in the first part, and then iteratively try to pick a block from the next part that is independent with all
picked blocks.
Note that there may be many different independent block sets and it is time-consuming to consider all of them.
In fact, it will be a rare case that we can not find an independent block set of size at least two after a large number of iterations.
So we set the stop condition of our algorithm as a running time bound or a maximum number of rounds.

\section{Approximation Ratio}
We first provide theoretical analysis for \textsc{Edge Partition Problem} in terms of the approximation quality.
The previous known approximation ratio for this problem is $O(d_{max}\sqrt{\log k\log n})$, which was first proved on graphs with some restrictions in~\cite{bourse2014balanced} and then extended to general graph in \cite{li2017simple}.
In this section, we will show an approximation ratio of $\min\{k, \widetilde{d}\}$, where $\widetilde{d}$ is the average degree of the graph. Note that $\widetilde{d}\leq d_{max}$. The new result significantly improves the previous approximation ratio.
We will also consider the lower bounds on the approximation ratio of our algorithms.

\begin{lemma}\label{d_average}
    Any feasible edge partition $P$ of a graph $G=(V,E)$ is an approximation solution with ratio at most $\min\{k, \widetilde{d}\}$.
\end{lemma}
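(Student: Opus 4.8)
The plan is to bound the ratio $RF(P)/RF(P^{*})$ for an arbitrary feasible partition $P$ and an optimal partition $P^{*}$ by bounding the numerator from above and the denominator from below with quantities that are easy to compare. The key observation is that every vertex must appear in at least one part, so $\sum_{i=1}^{k}|V(E_i)| \ge |V|$ for any partition, which gives $RF(P^{*}) \ge 1$. It therefore suffices to show $RF(P) \le \min\{k, \widetilde{d}\}$ for every feasible $P$, since then $RF(P)/RF(P^{*}) \le RF(P) \le \min\{k,\widetilde d\}$.

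For the bound $RF(P)\le k$: each part $E_i$ satisfies $|V(E_i)|\le |V|$ trivially, so $RF(P)=\frac{1}{|V|}\sum_{i=1}^{k}|V(E_i)| \le \frac{1}{|V|}\cdot k|V| = k$. For the bound $RF(P)\le\widetilde d$: I would count, for each part $E_i$, how large $|V(E_i)|$ can be relative to $|E_i|$. Every vertex in $V(E_i)$ is an endpoint of at least one edge in $E_i$, and each edge has exactly two endpoints, so $|V(E_i)|\le 2|E_i|$; hence $\sum_{i=1}^{k}|V(E_i)|\le 2\sum_{i=1}^{k}|E_i| = 2|E| = 2m$. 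Recalling that the average degree is $\widetilde d = 2m/n = 2m/|V|$, we get $RF(P)\le 2m/|V| = \widetilde d$. Combining the two bounds yields $RF(P)\le\min\{k,\widetilde d\}$, and with $RF(P^{*})\ge 1$ the claimed approximation ratio follows.

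I do not anticipate a serious obstacle here; the only point requiring a little care is making sure the inequality $RF(P^{*})\ge 1$ is justified cleanly — it follows because the parts $E_i$ cover all of $E$ and $G$ has no isolated vertices (or, if it does, they contribute equally to every partition's count and can be ignored), so every vertex lies in $V(E_i)$ for at least one $i$. A second minor point is that the definition of $\widetilde d$ as $2m/n$ should be stated explicitly so that the identification $2m/|V| = \widetilde d$ is transparent. Everything else is a one-line counting argument, so the proof should be short.
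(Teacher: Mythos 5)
Your proof is correct and follows essentially the same route as the paper: bound $RF(P)$ above by $k$ (each vertex appears in at most $k$ parts) and by $\widetilde d$ (via $\sum_i |V(E_i)|\le 2|E|$), then compare against $RF(P^*)\ge 1$. Your extra care about isolated vertices is a reasonable touch the paper glosses over, but otherwise the argument is the same.
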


\begin{proof}
Each vertex can be copied at most $k$ times in any feasible edge partition since there are only $k$ parts.
So it is trivially to get the approximation ratio of $k$.

Each edge appears only once in a feasible edge partition and then it can contribute at most $2$ vertices. So
it always holds that
$\sum_{i=1}^k |V(E_i)|\leq 2|E|$.
Thus,
    $$RF(P) \le \frac{2|E|}{|V|}=\widetilde{d}.$$
On the other hand, the optimal replication factor can be $1$ (when no vertex is copied).

Thus we get a bound for the approximation ratio $\min\{k, \widetilde{d}\}$.
\end{proof}

Note that the result in the above lemma does not rely on any algorithms. Any feasible solution will hold the approximation ratio.
%We also want to improve the approximation ratio for the solutions returned by our algorithms.
%Unfortunately, it is not so easy.
Next, we consider the approximation ratio related to our algorithms.

Our algorithms will try to move adjustable edges and blocks to decrease the replication factor.
We show that when the edge partition does not have any adjustable edges the approximation ratio $k$ in Lemma~\ref{d_average} can be slightly improved.

\begin{lemma}\label{noAE}
  For a feasible edge partition $P$ of a graph $G=(V,E)$,
  if $P$ does not have any adjustable edges, then it is an approximation solution with ratio $\leq \min\left\{\widetilde{d}, t(\frac{|I_t|}{|V|} + 1)\right\}$,
  where $t=\frac{k+1}{2}$, $V_t = \{v|v\in V \land |N(v)|> t\}$ and $I_t$ is a maximum independent set in the induced graph $G[V_t]$.
\end{lemma}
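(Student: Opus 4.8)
The plan is to bound $RF(P)$ by a direct vertex-counting argument that exploits the ``no adjustable edge'' hypothesis. For each vertex $v\in V$ let $c(v)$ be the number of parts $E_i$ with $v\in V(E_i)$; then $\sum_{i=1}^k|V(E_i)|=\sum_{v\in V}c(v)$, so $RF(P)=\frac{1}{|V|}\sum_{v\in V}c(v)$. The term $\widetilde{d}$ in the claimed bound is already supplied by Lemma~\ref{d_average}, so it suffices to prove $\sum_{v\in V}c(v)\le t(|V|+|I_t|)$ and then divide by $|V|$.

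First I would record two easy facts. (i) Every part containing $v$ must contain an edge incident on $v$, and these witnessing edges are distinct across parts (the parts are edge-disjoint), so $c(v)\le|N(v)|$ for all $v$. (ii) Let $B=\{v\in V: c(v)>t\}$ be the set of heavily replicated vertices; by (i), each $v\in B$ satisfies $|N(v)|>t$, hence $B\subseteq V_t$.

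The crux is to show that $B$ is an independent set of $G$, hence of $G[V_t]$. Suppose not: let $(u,v)\in E$ with $u,v\in B$, and say $(u,v)\in E_i$, so $u,v\in V(E_i)$. The index set of parts containing $u$ and that of parts containing $v$ are subsets of $\{1,\dots,k\}$ of sizes $c(u),c(v)>t=\frac{k+1}{2}$, so their intersection has size at least $c(u)+c(v)-k>1$, i.e.\ at least two. One common index is $i$; the other gives a part $E_j$ with $j\ne i$ and $u,v\in V(E_j)$, which makes $(u,v)$ an adjustable edge, contradicting the hypothesis on $P$. Therefore $B$ is independent in $G[V_t]$, and maximality of $I_t$ yields $|B|\le|I_t|$.

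Finally I would assemble the estimate by splitting the sum according to membership in $B$, using $c(v)\le t$ for $v\notin B$ and $c(v)\le k$ for $v\in B$:
\[
\sum_{v\in V}c(v)\le t(|V|-|B|)+k|B|=t|V|+\frac{k-1}{2}|B|\le t|V|+t|I_t|,
\]
where the last step uses $\frac{k-1}{2}\le\frac{k+1}{2}=t$ and $|B|\le|I_t|$. Dividing by $|V|$ gives $RF(P)\le t\big(\frac{|I_t|}{|V|}+1\big)$, and taking the minimum with the $\widetilde{d}$ bound of Lemma~\ref{d_average} finishes the proof. I expect the main obstacle to be the pigeonhole step showing $B$ is independent, in particular noticing that $c(u)+c(v)>k+1$ forces a \emph{second} shared part beyond $E_i$; the remaining bookkeeping is routine, the only care needed being that the slack $k-t=\frac{k-1}{2}$ does not exceed $t$, so the bound $|B|\le|I_t|$ can be absorbed with the coefficient $t$.
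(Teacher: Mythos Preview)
Your proof is correct and follows essentially the same argument as the paper: define the heavily replicated set $B=V_{>}$, use the pigeonhole bound $c(u)+c(v)>k+1$ to show $B$ is independent (hence $|B|\le|I_t|$), then split $\sum_v c(v)$ over $B$ and its complement. Your write-up is in fact a bit more careful, since you explicitly justify $B\subseteq V_t$ via $c(v)\le|N(v)|$ and your final arithmetic $t|V|+\tfrac{k-1}{2}|B|\le t|V|+t|I_t|$ is cleaner than the paper's corresponding inequality chain.
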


\begin{proof}
By Lemma~\ref{d_average}, we know that the approximation ratio is always not greater than $\widetilde{d}$. We only need to consider
$t(\frac{|I_t|}{|V|} + 1)$.

    Let $P=\{E_1, E_2, \dots, E_k\}$ be a feasible edge partition having no adjustable edges.
    For a vertex $v$, we use $p_v$ to denote the number of parts in $P$ containing $v$.
    Let $V_{>} = \{v\in V | p_v>t\}$ and $V_{\le }=V\setminus V_{>}$.

    We can see that $V_{>}$ is an independent set, otherwise, there are two vertices $u,v\in V_{>}$ such that $p_u+p_v>2t=k+1$, which
implies that $u$ and $v$ will appear in at least two same parts and then edge $(u,v)$ would be adjustable.
Since $V_{>}$ is an independent set, we have that $|V_{>}|\le |I_t|$.

So it holds that
    $$\begin{aligned}
        RF(P)&=\frac{1}{|V|}\left(\sum_{v\in V_{>}}p_v+\sum_{v\in V_{\le}}p_v\right)\\
        &\le\frac{1}{|V|}\left(k|V_{>}|+\frac{k+1}{2}|V_{\le}|\right)\\
        &<\frac{1}{|V|}\left((k+1)|I_t|+\frac{k+1}{2}(|V|-|I_t|)\right)\\
        &=t\left(\frac{|I_t|}{|V|} + 1\right).
    \end{aligned}$$
\end{proof}

Note that when the graph is sparse, the number of vertices in $V_t$ may not be large, and then the maximum independent set in $G[V_t]$ will be small. For this case, the ratio $t(\frac{|I_t|}{|V|} + 1)$ will be strictly smaller than $k$.

The condition in Lemma~\ref{noAE} is not easy to achieve. But at least Lemma \ref{noAE} implies that the solution quality may be better when there are
fewer adjustable edges.

On the other hand, we show that the approximation ratio $\widetilde{d}$ cannot be improved even when there are no adjustable edges.
We give an example of the $1$-balanced $k$-partition problem ($k=p(p-1)/2$ for some integer $p$), where the ratio $\widetilde{d}$ holds.
The graph contains $k$ independent cliques, each of which has exactly $k$ edges.
In the optimal solution, each connected component is partitioned to one part and then each vertex appears in exactly one part.
We can also construct a solution, where each part takes exactly one edge from each connected component. Then each vertex appears
in exactly  $(p-1)/2=\widetilde{d}$ parts. Furthermore, in this partition, there is no adjustable edge.
Please see Figure \ref{ApproTight} for an illustration of $k=3$.

\begin{figure}[ht]
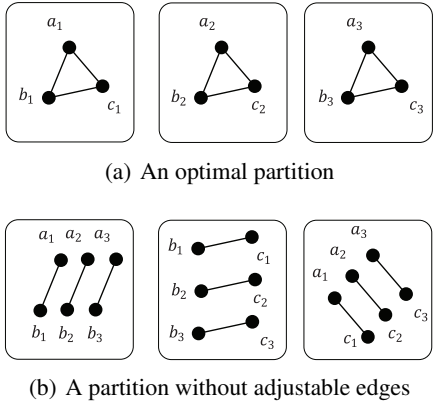

    \centering
     \setlength{\belowcaptionskip}{-0.5cm}  %����ͼƬ���������ľ���
    \subfigure[An optimal partition]{
        \centering
        \includegraphics[width=0.7\linewidth]{Visio-EP_7_b-eps-converted-to.pdf}
        \label{ApproTight:a}
    }
    \subfigure[A partition without adjustable edges]{
        \centering
        \includegraphics[width=0.7\linewidth]{Visio-EP_7_c-eps-converted-to.pdf}
        \label{ApproTight:b}
    }

    \caption{An example to achieve the ratio $\widetilde{d}$.}\label{ApproTight}
\end{figure}

We then show the approximation ratio $O(k)$ is also tight when no adjustable edges exist.
Note, although the approximation ratio $t(\frac{|I_t|}{|V|} + 1)$ we proved above is slightly better than $k$, this ratio still belongs to $O(k)$ since $|I_t|$ may be as large as $|V|$.
We still use an example of the $1$-balanced $k$-partition to illustrate this tight result.
The graph consists of $k$ independent complete bipartite subgraphs $K^1_{k^2, k}, K^2_{k^2,k}, \dots, K^k_{k^2, k}$, where each subgraph $K^i_{k^2,k}$ has exactly $k^2$ vertices on one side denoted by $U^i=\{u^i_1,u^i_2,\dots, u^i_{k^2}\}$ and $k$ vertices on the other side denoted by $V^i=\{v^i_1,v^i_2,\dots,v^i_{k}\}$.
Similar to the previous case, in the optimal solution, each connected component is partitioned into one part and then each vertex appears in exactly one part.
We can also construct a solution without any adjustable edges: for each subgraph $K^i_{k^2,k}$, we pick all incident edges of $v^i_j$ to part $j$.
Then each vertex in $V^i$ appears in $k$ parts while each vertex in $U^i$ appears in $1$ part.
Totally, the replication factor of this partition is $\frac{k^2+1}{k+1}=O(k)$.
Please see Fig.~\ref{ApproTight2} for an illustration of $k=2$.

\begin{figure}[ht]
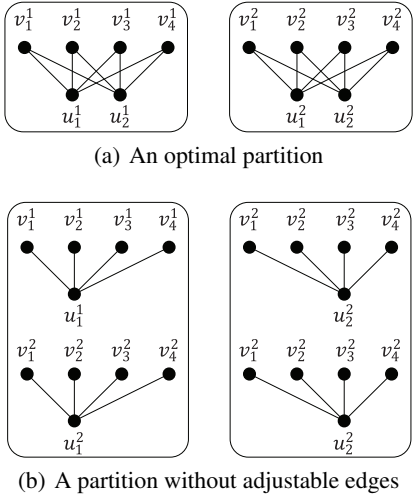

    \centering
    \setlength{\belowcaptionskip}{-0.5cm}   %����ͼƬ���������ľ���
    \subfigure[An optimal partition]{
        \centering
        \includegraphics[width=0.65\linewidth]{Visio-EP_8_a-eps-converted-to.pdf}
        \label{ApproTight2:a}
    }
    \subfigure[A partition without adjustable edges]{
        \centering
        \includegraphics[width=0.65\linewidth]{Visio-EP_8_b-eps-converted-to.pdf}
        \label{ApproTight2:b}
    }

    \caption{An example to achieve the ratio $O(k)$.}\label{ApproTight2}
\end{figure}

\section{Computational Experiments}
In this section, we evaluate the performance of our proposed algorithms LS-G and LS-F. Our objective is to minimize the replication factor (RF), which will be regarded as the quality measure.

\noindent{\textbf{Comparing Algorithms.}}
We consider initial edge partitions generated by the following four algorithms:
METIS \cite{karypis1998fast}, NE \cite{zhang2017graph}, SHEEP \cite{margo2015scalable} and SPAC~\cite{li2017simple}.
The very recent dSPAC~\cite{sebastian2019Scalable} is only a parallelized version of SPAC, and it generates the same edge partition as SPAC. So we do not need to consider it.
We will show the results obtained by our LS-G and LS-F with initial edge partitions generated by them. The algorithm after running our local search algorithm A on the initial partition generated by B is denoted by B+A, where A can be LS-G or LS-F, and B can be METIS, NE, SHEEP, or SPAC.

% \noindent{\textbf{Problem setting.}} Different from the previous works \cite{zhang2017graph} which set $\alpha < 1.1$ as the workload balance constraint, we test a varying number of different $\alpha$. Our objective is to minimize the replication factor (RF), which will be regarded as the quality measure.

\noindent{\textbf{Environment.}} Our algorithms are implemented in C++\footnote{Our code is put in \url{https://github.com/fafafafafafafa7/LS_Algorithm}\label{code_site}} and compiled with g++ version 5.4.0 with -O3 option. These experiments are carried out under Ubuntu 16.04.3 LTS, using an Intel Core i5-7200U CPU at 2.50GHZ and 8GB RAM. For algorithms involving randomness, we run them for $10$ times and report the average RF.

%\noindent{\textbf{Parameter setting.}} We will evaluate the algorithms with a varying number of parts $k$ from $16$ to $72$ and the default value of $k$ is set to $64$. Indeed we will also evaluate the algorithms with varying balance factors $\alpha$ from $1.01$ to $1.5$ and the default value of $\alpha$ is set to $1.1$. For LS-G and LS-F, we set the default maximum running time to $80$ seconds.

% \noindent{\textbf{Datasets.}} We use real datasets from the Network Data Repository online~\cite{nr-aaai15}, including a collaboration network (coauthors-dblp), a biological network (grid-yeast), a Facebook network (Texas84), and a social network (lastfm). The size of the networks is reported in Table \ref{tableSize}. As shown in Figure~\ref{figPowerLaw}, their node degrees basically follow the power-law distribution.
% These four datasets are used to show some details and make a better understanding of our algorithms.
% To assess the robustness, we additionally test our algorithms on more than 50 networks (including networks with millions of edges) from that repository.
%which as mentioned in the introduction is a common property of many real-world graphs.

\noindent{\textbf{Datasets.}}  We use real datasets from the Network Data Repository online~\cite{DBLP:conf/aaai/2015} which is a well-known network repository containing a large number of networks in several different domains.
To make an extensive evaluation of our algorithms, we select 1872 graphs from that repository after discarding datasets that are not in any graph format or of small sizes (less than 1,000 edges) since it is less interesting to partition small graphs in a distribution system.
We ensure that these selected graphs can cover both a wide range of size levels (from thousand edges to more than 17 million edges) and various domains (including 19 different domains: from real-life social graphs to manually generated graphs).
The detailed information about these 1872 selected graphs can be found in our GitHub repository\textsuperscript{\ref{code_site}}, which also contains our source code and some detailed experimental results.

\noindent{\textbf{Average Evaluations.}}
First, we present the average results on the 1872 datasets with the default setting $k=64$ and $\alpha=1.1$ to show the big picture of the performance.
Overall, we can get an average improvement of $12.07\%$ for LS-G and $13.20\%$ for LS-F. The detailed improvements with the four different initial partitions are shown in Fig.~\ref{f-4} and Fig.~\ref{f-5}, where we also show the proportion of the instances with the best result for four different initial partitions.

On average, LS-F can get more improvements compared with LS-G. However, LS-G will use less running time. The comparison of running time is omitted here due to the limited space.
The improvements on NE are not significant. However, for most instances, the performance of METIS+LS-G and METIS+LS-F is much better than that of NE+LS-G and NE+LS-F.
For LS-F, about $42.51\%$ best results are obtained by using initial partitions of METIS.
\begin{figure}[h!]
\begin{minipage}[t]{0.5\linewidth}
\centering
     \setlength{\belowcaptionskip}{-0.5cm}  %����ͼƬ���������ľ���
%\captionsetup{font={footnotesize}} % �޸�Fig.���������Ð?
\includegraphics[width = 1\linewidth]{1872_G-eps-converted-to.pdf}
\caption{LS-G.}\label{f-4}
\end{minipage}%
\begin{minipage}[t]{0.5\linewidth}
\centering
\includegraphics[width = 1\linewidth]{1872_F-eps-converted-to.pdf}
\caption{LS-F.}\label{f-5}
\end{minipage}
\end{figure}

\noindent{\textbf{Detailed Comparisons.}}
To give clear comparisons, we also select four concrete instances from the 1872 instances as examples to illustrate the details.
The four instances are selected from four different domains with different size levels: coauthors-dblp $(540486, 15245729)$ from ``collaboration networks'', grid-yeast $(6008, 156945)$ from ``biological networks'', Texas84 $(36364, 1590651)$ from ``Facebook networks'', and lastfm $(1191805, 4519330)$ from ``social networks''. The two numbers in the brackets are the numbers of vertices and edges of the graph.

Most previous algorithms, say METIS, NE, SHEEP, and SPAC, fixed the balance value $\alpha=1.1$. So we also take this setting and show the results under different values of $k$ in Figure~\ref{figGF}.
We can see that as $k$ grows, LS-G and LS-F can consistently and effectively enhance the results produced by initial algorithms.

\vspace{-3mm}

\begin{figure}[h!]
    \centering
         \setlength{\belowcaptionskip}{-0.2cm}  %����ͼƬ���������ľ���
    \includegraphics[width = 0.8\linewidth]{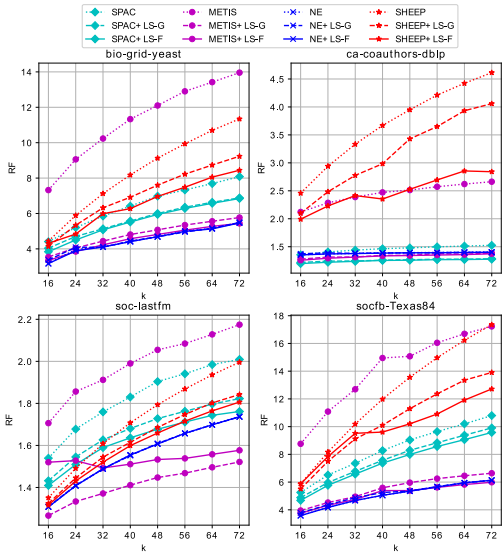}
    \caption{The results with different values of $k$.}\label{figGF}
\end{figure}

To make a full understanding of our local search methods,
we also do break-down analysis by visualizing the effect of our local search methods.
We show in Table~\ref{blocknum} the number of blocks in the initial partitions generated by different edge partitioners.
We can see that METIS generates much more blocks than the other three algorithms, which as a consequence enlarges the search space of METIS.
The number of blocks generated by NE is small. So for initial partitions generated by NE, our local search algorithms can only improve a small part, as shown in Fig.~\ref{f-4} and Fig.~\ref{f-5}.

\begin{table}[h!]
    \centering
         \setlength{\belowcaptionskip}{-0.2cm}
    \caption{The number of blocks.}\label{blocknum}
%    \smallskip
    \small
    \begin{tabular}{|l|c|c|c|c|}
    \hline
    & grid-yeast & coauthors-dblp & lastfm & Texas84   \\ \hline
    METIS   & 76,321  & 870,709    & 1,172,995 & 587,398 \\ \hline
    SHEEP   & 40,107  & 567,944    & 226,445   & 477,879 \\ \hline
    SPAC    & 10,997  & 139,119    & 395,783   & 64,399  \\ \hline
    NE      & 642     & 1534       & 238       & 127    \\ \hline
    \end{tabular}
\end{table}

Figure~\ref{figBloNum} shows the number of blocks of different sizes before and after applying LS-G and LS-F on METIS.
More than $95\%$ blocks are of size at most 20.
Most of the blocks with medium or large sizes have been removed by our algorithms.
 For these instances, LS-F can reduce more blocks and it generates a better result than LS-G. We can also see that the number of size-$1$ blocks drops sharply. The number of adjustable edges incident on size-1 blocks may be small and then it may
be easy to be reduced by our algorithms. So size-1 blocks should contribute the improvement greatly.

\vspace{-2mm}

\begin{figure}[h!]
    \centering
         \setlength{\belowcaptionskip}{-0.6cm}  %����ͼƬ���������ľ���
    \includegraphics[width = 1\linewidth]{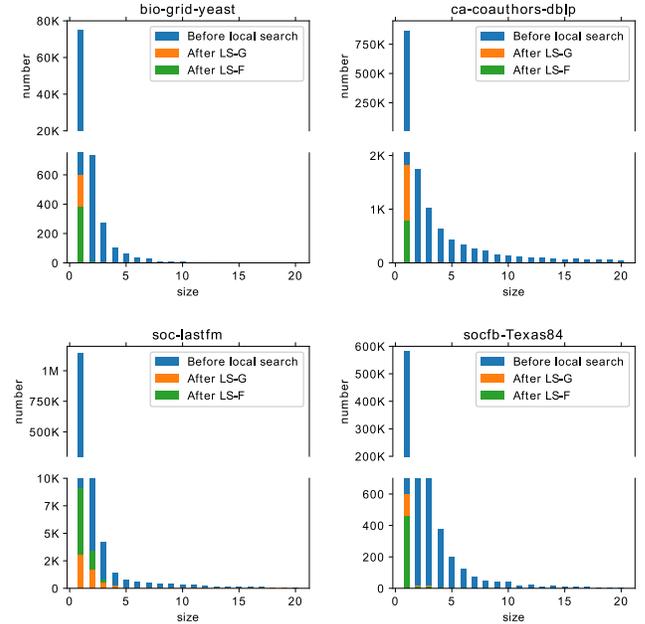}
    \caption{The number of blocks before and after applying LS-G and LS-F on METIS.}\label{figBloNum}
\end{figure}

\subsection{Further Applications and Discussion}
 There are two frequently used computation tasks in graph-parallel computation: PageRank~\cite{BRIN1998107} and triangle counting~\cite{zhang2017graph,xie2014distributed}. We run these two tasks on the above four selected instances.
Compared with initial edge partitions obtained by METIS, SHEEP, SPAC, and NE, our local search algorithms improve the running time for most cases (except NE on grid-yeast) with an average speedup of $9.23\%$ for LS-G
(resp., $10.60\%$ for LG-F) in task PageRank; and with an average speedup of  $8.00\%$ for LS-G (resp., $7.92\%$ for LG-F) in task triangle counting. The detailed results can be found in the Appendix.
 Note that although our algorithms always get improved edge partitions, the total running time of concrete computation tasks may not always be improved. The reason should be that the different computation tasks within some components may become worse even the partition is improved.

GraphX is a well known graph-parallel computation system~\cite{gonzalez2014graphx}. It has some built-in edge partition algorithms, which cannot be exported from the system, and then we
are unable to apply our local search algorithms on them directly. Compared with the results obtained by the built-in algorithms of GraphX, our local search algorithms (with METIS, SHEEP, SPAC, and NE) can get an average speedup of $30.82\%$ (the running time of computation tasks after giving the partition).
%Our local search algorithms are memory algorithms, which may use more time and memory to find better partitions.
Further applications of our algorithms in graph analytic systems are worthy of deep study. Anyway, this paper gives some structural properties and efficient algorithms together with theoretically proved approximation ratios for an important optimization problem.

% ensure same length columns on last page (might need two sub-sequent latex runs)
% \balance

%ACKNOWLEDGMENTS are optional

\section*{Acknowledgements}
The work is supported by the National Natural Science Foundation of China, under grants 61972070 and 61802049,
Singapore Ministry of Education, under grant MOE2017-T2-1-141, and Sub Project of Independent Scientific Research Project, under grant ZZKY-ZX-03-02-04.

% The following two commands are all you need in the
% initial runs of your .tex file to
% produce the bibliography for the citations in your paper.
\bibliographystyle{aaai21}
\bibliography{ep}  % vldb_sample.bib is the name of the Bibliography in this case
% You must have a proper ".bib" file
%  and remember to run:
% latex bibtex latex latex
% to resolve all references

%APPENDIX is optional.
% ****************** APPENDIX **************************************
% Example of an appendix; typically would start on a new page
%pagebreak

\end{document}